\documentclass[conference]{IEEEtran}
\usepackage{amsmath}
\usepackage{multicol}
\usepackage{mathtools}
\usepackage{amssymb}
\usepackage{amsmath}  
\usepackage{subeqnarray}

\usepackage{algorithm}
\usepackage{algorithmic}        

\usepackage{cite}

\usepackage{array}  
\usepackage{url}

\usepackage{amsthm}      
\theoremstyle{plain}     
\newtheorem{prop}{Proposition}

\theoremstyle{definition}

\theoremstyle{remark}



\usepackage{graphicx}
\usepackage{subfigure}

\begin{document}

%
\title{Optimal Scheduling across Heterogeneous Air Interfaces of LTE/WiFi Aggregation}
\author{\IEEEauthorblockN{Yu Gu\IEEEauthorrefmark{1},
Qimei Cui\IEEEauthorrefmark{1},
Wei Ni\IEEEauthorrefmark{2},
Ping Zhang\IEEEauthorrefmark{1},
Weihua Zhuang\IEEEauthorrefmark{3}
}
\IEEEauthorblockA{\IEEEauthorrefmark{1}National Engineering Laboratory for Mobile Network\\
Beijing University of Posts and Telecommunications, Beijing, 100876, China \\
Correspondence authors: (guyu@bupt.edu.cn, cuiqimei@bupt.edu.cn)}
\IEEEauthorblockA{\IEEEauthorrefmark{2}CSIRO, Sydney, N.S.W. 2122, Australia}
\IEEEauthorblockA{\IEEEauthorrefmark{3}Department of Electrical and Computer Engineering, University of Waterloo, Waterloo, ON, N2L 3G1, Canada.}
}

\maketitle
\pagestyle{plain}
\begin{abstract}
LTE/WiFi Aggregation (LWA) provides a promising approach to relieve data traffic congestion in licensed bands by leveraging unlicensed bands. Critical challenges arise from provisioning quality-of-service (QoS) through heterogenous interfaces of licensed and unlicensed bands. In this paper, we minimize the required licensed spectrum without degrading the QoS in the presence of multiple users. Specifically, the aggregated effective capacity of LWA is firstly derived by developing a new semi-Markov model. Multi-band resource allocation with the QoS guarantee between the licensed and unlicensed bands is formulated to minimize the licensed bandwidth, convexified by exploiting Block Coordinate Descent (BCD) and difference of two convex functions (DC) programming, and solved efficiently with a new iterative algorithm. Simulation results demonstrate significant performance gain of the proposed approach over heuristic alternatives.
\end{abstract}

\begin{IEEEkeywords}
Effective capacity, WiFi system, LTE system, resource allocation, QoS
\end{IEEEkeywords}

%
\IEEEpeerreviewmaketitle

\section{Introduction}
With the prevalence of smartphones, mobile data have been continuously growing and are expected to increase astoundingly 1000-fold by 2020 \cite{LAA}. Due to the prominent spectrum crunch, both industry and academia resort to unlicensed spectra, e.g., 5GHz ISM band, to accommodate
the rapidly growing mobile traffic \cite{7047304}. Approaches, such as cellular-to-WiFi offloading, have been proposed. In 3GPP Release-13, the aggregation of long-term evolution (LTE) and WiFi, also known as ``LTE/WiFi Aggregation (LWA)'', has been specified to leverage both the licensed and unlicensed spectra for data communications \cite{1}.

A few critical challenges arise from the aggregation of LTE and WiFi, especially when quality-of-service (QoS) is considered \cite{1}. The first challenge is that it is generally difficult for the contention-based WiFi access to guarantee QoS. This increasingly deteriorates, as the number of WiFi nodes, including both WiFi access points (APs) and stations (STAs), increases and the transmission collisions between the nodes become increasingly intensive. An other critical challenge is to optimally schedule the transmission of a data stream across different air interfaces associated with distinctive QoS-guaranteeing properties, such as decentralized contention-based access of WiFi and centrally coordinated access of LTE, while provisioning undegraded QoS to the stream \cite{1}. The complexity increases for a large number of streams to be scheduled in an LTE cell or WiFi AP  \cite{2}. An effective measure to quantify the QoS-provisioning capabilities of different air interference is critical, but yet to be developed, to implement the scheduling.

There are a small number of studies on traffic offloading between the licensed and unlicensed bands. In our previous work \cite{2}, we propose a unified framework supporting mobile converged network to implement LWA with a general description, which tightly integrates LTE and WiFi at the Medium access control (MAC) layer. To support a guaranteed bit rate (GBR) or non-GBR bearer, a joint access grant and resource allocation is proposed based on the QoS class indicator. In \cite{7}, a joint allocation of sub-carriers and power in the licensed and unlicensed band is presented to minimize the system power consumption with the aid of Lyapunov optimization. However, QoS has not been explicitly taken into account, and the impact of contention-based WiFi access on the QoS is yet to be investigated.

There are various works on QoS of data streams delivered through a single air-interface. Effective capacity (EC) theory has been applied to quantify the QoS, or more specifically the delay of a stream in a statistical fashion. In \cite{6840975}, the EC is used to measure the quality of mobile video traffic, and radio resources are allocated to maximize the EC under the Karush-Kuhn-Tucker (KKT) conditions. In our previous work \cite{LAA}, a semi-Markov model was developed to characterize the EC of licensed-assisted access networks, but still focus on a single air interface.

Different from the existing studies, in this paper, we focus on decoupling a stream with QoS into multiple sub-streams with different QoS in adaptation to heterogeneous air interfaces, and investigate an optimal scheduling of LWA systems to minimize the required licensed spectrum without degrading the QoS of multiple users. First, we derive the aggregated EC of LWA under statistical QoS requirements. In particular, a closed-form expression of EC in the unlicensed band is derived based on a semi-Markov model. Then, multi-band resource allocation problem with the QoS guarantee between licensed and unlicensed band is formulated to minimize the licensed bandwidth of LWA. The problem is a mixed-integer nonlinear and non-convex problem. A new iterative algorithm is proposed to convexify the problem as a series of subproblems based on Block Coordinate Descent (BCD) and difference of two convex functions (DC) programming. Simulation results demonstrate the significant performance gain over the heuristic alternatives.


\section{System Model}
We consider an LWA BS with a WiFi air interface in the unlicensed band 1, and a LTE air interface in the licensed band 2. The bandwidth of band $m~(m=1,2)$ is $B_{m}$. There are $N$ LWA users associated with the LWA BS. We assume Rayleigh block flat-fading channels in both bands; in other words, the channels remain unchanged during a time frame $T$, and vary across different time frames. Let $\gamma_{n,m}$ be the signal-to-noise ratio (SNR) of user $n$ in band $m$. The LWA BS can transmit to each of the users through both the licensed LTE interface and the unlicensed WiFi interface. In the latter case, the LWA BS acts as a standard WiFi access point (AP) to simultaneously deliver packets to all the users, using OFDMA techniques.

Apart from the LWA BS, there are also $L$ WiFi nodes operating in the unlicensed band within the coverage of the LWA BS. In the unlicensed band, all the $(L+1)$ nodes operate Distributed Coordination Function (DCF) to access the channel. Whenever a node has a packet to (re)transmit, it starts to sense the unlicensed band for a predefined period, termed ``distributed inter-frame space (DIFS)'', and generates an integer backoff timer randomly and uniformly within a contention window (CW) $[0,W_k)$, where the subscript ``$k$'' indicates the $k$-th retransmission of a packet, $k=0,1,\cdots,K-1$.
The timer counts down by one per slot, if the unlicensed band is free; otherwise, it freezes until the unlicensed band is free for DIFS again. Once the backoff timer reaches zero, a retransmission of the node is triggered.
The CW is doubled if the retransmission fails, i.e., no acknowledgment (ACK) is returned. After $K$ unsuccessful retransmission, the packet is discarded and the CW is reset to $W_0$.

Different from the single air-interface system, here packets arriving at the LWA BS and destined for user $n$ can be scheduled to traverse both air interfaces in the licensed and unlicensed bands. Two separate transmit queues are designed for the two air interfaces, as shown in Fig. 1(a). A binary variable $x_{m,n}$ is used to denote the band selected for the packets: $x_{m,n}=1$ if the LWA BS selects band $m$ to transmit the packets to user $n$; $x_{m,n}=0$, otherwise. For $x_{m,n}=1$, the bandwidth allocated to user $n$ is denoted by $\beta_{m,n}$.

The QoS of packets for user $n$ can be characterized by a QoS exponent, $\theta_n$. The larger $\theta_n$ is, the more stringent QoS is required. We propose to decouple $\theta_n$ between the two air interfaces, if both air interfaces are selected for the packets. Let $\theta_{m,n}$ denote the QoS exponent for the transmit queue of user $n$ in band $m$. We want to determine $x_{m,n}$, $\beta_{m,n}$ and $\theta_{m,n}$ $(m=1,2$, $n=1,\cdots,N)$, such that the maximum number of packets can be delivered without compromising ${\theta _n},\forall n$.

Each user, $n$, can also have a requirement of minimum data rate, denoted by $R_n$; a delay bound, denoted by $D_{th}^n$; and the maximum delay-bound violation probability threshold, denoted by $P_{th}^n$. The EC, denoted by $C_{m,n}(\theta_{m,n} )$, can be defined to be the maximum consistent arrival rate at the input of the transmit queue for user $n$ in band $m$, as given by \cite{LAA}
\begin{equation}\label{12}
C_{m,n}(\theta_{m,n} ) = \mathop {\lim }\limits_{t \to \infty }  - \frac{1}{{\theta_{m,n} t}}\log (\mathbb{E}\{ {e^{ - \theta_{m,n} S_{m,n}(t)}}\} )
\end{equation}
where $S_{m,n}(t)$ is the total number of bits transmitted to user $n$ in band $m$ during the period $[0,t)$, and $\mathbb{E}(\cdot)$ takes expectation.

By the EC theory \cite{6840975}, the delay-bound violation probability can be approximated to
\begin{equation}\label{1112}
\Pr\{ D > D_{th}^n\}  \approx {e^{ - {\theta _{m,n}}{C_{m,n}}({\theta _{m,n}})D_{th}^n}},\forall n \in {\mathcal{N}},\forall m \in {\mathcal{M}}.
\end{equation}

The delay-bound violation probability of the original data stream needs to satisfy
\begin{equation}\label{eq11c}
\frac{{\sum\limits_{m \in \mathcal{M}} {{x_{m,n}}{e^{ - {\theta _{m,n}}{C_{m,n}}({\theta _{m,n}})D_{th}^n}}{C_{m,n}}} ({\theta _{m,n}})}}{{\sum\limits_{m \in \mathcal{M}} {{x_{m,n}}{C_{m,n}}({\theta _{m,n}})} }} \le P_{th}^n,\forall n \in \mathcal{N}
\end{equation}
where ${\sum\limits_{m \in \mathcal{M}} {x_{m,n}}{{e^{ - {\theta _{m,n}}{C_{m,n}}({\theta _{m,n}})D_{th}^n}}{C_{m,n}}} ({\theta _{m,n}})}$ gives the total number of packets delivered to user $n$ via both bands before the delay bound, and ${\sum\limits_{m \in \mathcal{M}} {{x_{m,n}}{C_{m,n}}({\theta _{m,n}})} }$ is the total number of packets delivered to the user.
%

\begin{figure}[t]
\centering
\includegraphics[width=3.6in]{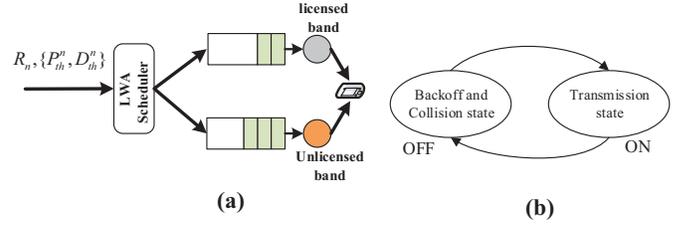}
\vspace{-0.4cm}
\caption{(a) Equivalent queuing model; (b) The On-off semi-Markov model for the unlicensed band.}
\label{fig.1}
\vspace{-0.2cm}
\end{figure}

\section{EC analysis of LWA}
In this section, we derive the EC of LWA through two heterogeneous air interfaces, which plays a key role to minimize the required licensed spectrum of the original stream while preserving QoS.
\subsection{EC in Unlicensed Band}
A semi-Markov model can be developed to evaluate the retransmission process of the LWA BS in the unlicensed band 1, as shown in Fig. \ref{fig.1}(b). For any user, $n$, the semi-Markov model consists of, namely, the ``on" state and the ``off" state. The ``on'' state corresponds to a successful retransmission of the user in the unlicensed band, with the average data rate ${\beta _{1,n}}{\log _2}(1 + {\bar \gamma _{1,n}})$ and the constant sojourn time ${t_s}$.

The ``off'' state accounts for the interval between two consecutive successful retransmissions of the LWA BS. It collects collided retransmissions between the two successful retransmissions, and the timeslots backed off in response to collisions. The transmission rate at the ``off'' state is zero, and the sojourn time is a random variable, denoted by ${t_{\rm off}}$.
Clearly, the transition probability matrix between the ``on'' and ``off'' states is ${\mathbf{P}} = \left[ {\begin{array}{*{20}{c}}
0&1\\
1&0
\end{array}} \right]$.


The probability generating function (PGF) of ${t_{\rm off}}$, denoted by $\hat{t}_{\rm off}$, can be evaluated as follows. First, the probability that a packet experiences $0<k \leq K-1$ retransmissions is given by
\begin{equation}\label{eq51}
{P_k} = \left\{ {\begin{array}{*{20}{l}}
{(1 - {p_c})p_c^k,  }~k \in \{ 0,1, \cdots ,K - 2\};\\
{p_c^{K - 1}{\rm{~~~~~~}},}~k = K - 1,
\end{array}} \right.
\end{equation}
where ${p_c}$ is the collision probability per slot. In response to the $k$ collisions, the total delay, denoted by $t_{\rm{off}}(k)$, can be written as
$t_{\rm off}(k) = k{t_c} + \sum\limits_{i = 1}^{{B_k}} {{X_i}},0 \le k \le K - 1,$
where ${t_c}$ is the duration of a collided retransmission in the unlicensed band; $X_i$ is the duration of the $i$-th timeslot; as the last successful retransmission, ${B_k}{\rm{ = }}\sum\limits_{j = 0}^k {{\eta _j}}$, is the total number of timeslots backed off in response to the $k$ collisions; and $\eta_j$ is the number of timeslots backed off in response to $j$-th collision.

Note that $X_i$ can be an idle minislot with duration of $\delta$, or a timeslot accommodating a collision-free transmission with duration of $t_s$, or a timeslot accommodating collided retransmissions with duration of $t_c$. $X_i$ is independent and identically distributed, and hence the subscript ``$i$'' is omitted in the following. The probability mass function (PMF) of $X$ can be given by
\begin{equation}\label{eq10}
{p_{X}}(x) = \left\{ {\begin{array}{*{20}{l}}
{\Pr \{ X = \sigma \}  = {{(1 - \tau )}^L} = 1 - {p_c}}\\
{\Pr \{ X = {t_s}\}  = L\tau {{(1 - \tau )}^{L - 1}}}\\
{\Pr \{ X = {t_c}\}  = {p_c} - L\tau {{(1 - \tau )}^{L - 1}}}
\end{array}} \right.
\end{equation}
where $\tau $ is the probability that a node transmits per slot. Both $\tau $ and ${p_c}$ can be uniquely determined \cite{14}. Then, the PGF of $X$, denoted by $\hat X(z)$, is given by
\begin{equation}\label{eq111}
\hat X(z) = \Pr\{ {X} = \sigma \} {z^\sigma } + \Pr\{ {X} = {t_s}\} {z^{{t_s}}} + \Pr\{ {X} = {t_c}\} {z^{{t_c}}}.
\end{equation}

Let ${\hat \eta _i}(z)$ denote the PGF of the number of time slots for the $i$-th retransmission, as given by
\begin{equation}\label{eq9}
{{\hat \eta }_i}(z) = \frac{1}{{{W_i}}}\frac{{1 - {z^{{W_i}}}}}{{1 - z}}\;\;,i = 0, \cdots ,K - 1.
\end{equation}
As a result, $\hat t_{\rm off}(z)$ is given by
\begin{equation}\label{eq122}
\begin{aligned}
\hat t_{\rm off}(z)= \sum\limits_{k = 0}^{K-1} {{P_k}\mathbb{E}[{z^{kt_{\rm off}(k)}}]} =\sum\limits_{k = 0}^{K - 1} {[{P_k}{z^{k{t_c}}}\prod\limits_{i = 0}^k {{{\hat \eta }_i}(\hat X(z))} ]}.
\end{aligned}
\end{equation}
The moment generating function (MGF) of $t_{\rm off}$, i.e., ${M_{\rm off}}(x) = \hat t_{\rm off}({e^x})$, can be achieved by substituting $z=e^x$ into (\ref{eq122}).

With reference to \cite{LAA}, we define two auxiliary variables, namely $s$ and $u$, and construct a diagonal matrix $\Gamma (s,u)$, in which the diagonal elements are the MGFs of the semi-Markov model, given by
\begin{equation}\label{eq30}
\begin{aligned}
\mathbf{\Gamma} (s,u) = \left[ {\begin{array}{*{20}{c}}
{\hat t_{\rm off}({e^{ - u}})}&0\\
0&{{e^{({\beta _{1,n}}{\log _2}(1 + {\bar \gamma _{1,n}})s - u){t_{s}}}}}
\end{array}} \right].
\end{aligned}
\end{equation}
We also construct
\begin{equation}
{\bf{H}}{(s,u)}= \mathbf{\Gamma} (s,u){\mathbf{P}}
=\left[ {\begin{array}{*{20}{c}}
{\rm{0}}&{{e^{({\beta _{1,n}}{{\log }_2}(1 + {{\bar \gamma }_{1,n}})s - u){t_s}}}}\\
{{{\hat t}_{\rm off}}({e^{ - u}})}&0
\end{array}} \right].
\end{equation}
Note that ${\bf{H}}{(s,u)}$ is non-negative irreducible matrix, since it cannot be restructured into an upper-triangular matrix by row/column operations. The spectral radius of ${\bf{H}}{(s,u)}$ is denoted by $\phi (s,u) = \rho (\mathbf{H}(s,u))$, where $\rho(\cdot)$ denotes the spectral radius of a matrix. According to \cite{LAA}, the EC of the semi-Markov model can be evaluated as $\frac{u}{s}$, when $\phi (s,u) = 1$ and ${\theta _{m,n}} =  - s$. As a result, the EC of user $n$ in the unlicensed band, $C_{1,n}$, can be evaluated by solving $\phi ( - \theta_{1,n} , - \theta_{1,n} {C_{1,n}}) = 1$.

As an eigenvalue of $\mathbf{H}( - \theta_{1,n} , - \theta_{1,n} {C_{1,n}})$, $\phi ( - \theta_{1,n} , - \theta_{1,n} {C_{1,n}})$ satisfies
\begin{equation}\label{square matrix}
\begin{aligned}
&{\left| {{\bf{H}}( - {\theta _{1,n}}, - {\theta _{1,n}}{C_{1,n}}) - \phi ( - {\theta _{1,n}}, - {\theta _{1,n}}{C_{1,n}}){\bf{I}}} \right| =}\\
&\left| {\begin{array}{*{20}{c}}
{ - \phi ( - {\theta _{1,n}}, - {\theta _{1,n}}{C_{1,n}})}&{{{\hat t}_{off}}({e^{{\theta _{1,n}}{C_{1,n}}}})}\\
{{e^{ - {\theta _{1,n}}({\beta _{1,n}}{{\log }_2}(1 + {{\bar \gamma }_{1,n}}) - {C_{1,n}}){t_s}}}}&{ - \phi ( - {\theta _{1,n}}, - {\theta _{1,n}}{C_{1,n}})}
\end{array}} \right|\\
&= \phi {( - {\theta _{1,n}}, - {\theta _{1,n}}{C_{1,n}})^2} - {e^{ - {\theta _{1,n}}({\beta _{1,n}}{{\log }_2}(1 + {{\bar \gamma }_{1,n}}) - {C_{1,n}}){t_s}}}\\
&~~~{{\hat t}_{\rm off}}({e^{{\theta _{1,n}}{C_{1,n}}}}) = 0\\
\end{aligned}
\end{equation}
where $\bf{I}$ is the identity matrix.
Substitute $\phi ( - \theta_{1,n} , - \theta_{1,n} {C_{1,n}})=1$ into (\ref{square matrix}) and then take logarithms. The EC of user $n$ in the unlicensed band can be obtained by solving
\begin{equation} \label{gy5}
F({C _{1,n}}\theta_{1,n} ) - \beta_{1,n} \theta_{1,n} {\log _2}(1 + \bar \gamma_{1,n} ){t_s} = 0
\end{equation}
where $F(x) = \log (\hat t_{\rm off}({e^x})) + x{t_s}$ for notation simplicity.

As MGF $\hat t_{\rm off}({e^x})$ increases monotonically with $x$, $F(x)$ is monotonically increasing and therefore invertible. We define ${F^{ - 1}}(x)$ as the inverse function of $F(x)$. The closed-form expression for the EC of user $n$ in the unlicensed band 1, is given by
\begin{equation}\label{eq11}
{C_{1,n}}({\beta _{1,n}},{\theta _{1,n}}) = \frac{1}{{{\theta _{1,n}}}}{F^{ - 1}}({\beta _{1,n}}{\theta _{1,n}}{\log _2}(1 + {{\bar \gamma }_{1,n}}){t_s}).
\end{equation}
\subsection{EC in Licensed Band}
In the licensed band 2, the EC of user $n$ is given by\cite{11}
\begin{equation}\label{eq5}
\begin{array}{l}
{C _{2,n}}({\beta _{2,n}}, \theta _{2,n}) =  - \frac{1}{{{\theta _{2,n}}T}}\log ({\mathbb{E}_\gamma }\{ {e^{ - {\theta _{2,n}}{\beta _{2,n}}T{{\log }_2}(1 + {\gamma})}}\} )\\
=  - \frac{1}{{{\theta _{2,n}}T}}\log (\int_0^\infty  {{{(1 + \gamma )}^{ - \frac{{{\theta _{2,n}}{\beta _{2,n}}T}}{{\log 2}}}}} \frac{1}{{{{\bar \gamma }_{2,n}}}}{e^{ - \frac{\gamma }{{{{\bar \gamma }_{2,n}}}}}}d\gamma )\\
=  - \frac{1}{{\theta_{2,n} T}}\log (\frac{1}{{\bar \gamma_{2,n} }}{{\bar \gamma_{2,n} }^{ - (\frac{{\theta_{2,n} \beta_{2,n} T}}{{2\log  2}} - 1)}}{e^{\frac{1}{{2\bar \gamma_{2,n} }}}}\Gamma )\\
 = \frac{\beta_{2,n} }{{2\log  2}}\log (\bar \gamma_{2,n} ) - \frac{1}{{2\theta_{2,n} T\bar \gamma_{2,n} }} - \frac{1}{{\theta_{2,n} T}}\log \Gamma
\end{array}
\end{equation}
where $\Gamma  = {W_{\frac{{\theta_{2,n} \beta_{2,n} T}}{{2\log  2}},\frac{{\ln 2 - \theta_{2,n} \beta_{2,n} T}}{{2\log  2}}}}(\frac{1}{{\bar \gamma }})$, and $W_{(\cdot,\cdot)}(\cdot)$ represents the Whittaker function.

As a result, the aggregated EC of user $n$, i.e., $\sum\limits_{m \in {\cal M}} {{x_{m,n}}{C_{m,n}}({\beta _{m,n}},{\theta _{m,n}})}$, can be evaluated through (\ref{eq11}) and (\ref{eq5}), subject to the QoS of the user (3). To this end, the QoS of each user can be guaranteed by best leveraging the bandwidth allocated to each user in both air interfaces while guaranteeing the QoS per band.

\section{Problem formulation}

We aim to minimize the total allocation of the licensed bandwidth while guaranteeing  the QoS of all the users. This can be formulated as
\begin{align*} \label{P1}
\stepcounter{equation}
\textbf{{P1:}}&\mathop {{{{\rm{minimize}}}}}\limits_{\{\beta _{m,n}\},\{\theta _{m,n}\},\{x _{m,n}\}} \sum\limits_{n \in {\cal N}} {x_{2,n}}{\beta _{2,n}} \tag{\arabic{equation}a}
\\
s.t.&\sum\limits_{m \in {\cal M}} {x_{m,n}}{{C_{m,n}}} ({\beta _{m,n}},{\theta _{m,n}}) \ge {R_n},\forall n \in {\cal N};\tag{\arabic{equation}b}
\\
&\sum\limits_{n \in \mathcal N} {x_{1,n}}{{\beta _{1,n}} \le {B_{1}}};\tag{\arabic{equation}c}
\\
&\frac{{\sum\limits_{m \in M} {{x_{m,n}}{e^{ - {\theta _{m,n}}{C_{m,n}}D_{th}^n}}{C_{m,n}}} }}{{\sum\limits_{m \in M} {{x_{m,n}}{C_{m,n}}} }} \le P_{th}^n,\forall n \in N;\tag{\arabic{equation}d}
\\
&{\beta _{m,n}} \ge 0,{\theta _{m,n}} \ge 0, \forall n \in {\mathcal{N}},\forall m \in {\mathcal{M}};\tag{\arabic{equation}e}\\
&{x_{m,n}} = \{ 0,1\}, \forall n \in {\mathcal{N}},\forall m \in {\mathcal{M}}.\tag{\arabic{equation}f}\\
\end{align*}
Eq. (15b) satisfies the minimum data rate of user $n$; (15c) guarantees that the total bandwidth allocated in the unlicensed band does not exceed $B_1$; (15d) is set to meet the QoS of user $n$; (15e) and (15f) are generic constraints to specify the feasible region of the problem. Clearly, \textbf{P1} is a combinatorial, mixed integer program. Moreover, ${{e^{ - {\theta _{m,n}}{C_{m,n}}D_{th}^n}}}$ and ${{x_{m,n}}{C_{m,n}}}$ are coupled in a multiplicative way in (15d). As a consequence, $C_{m,n}(\beta_{m,n},\theta_{m,n})$ is not joint convex in $\beta_{m,n}$ and $\theta_{m,n}$.

With $x_{m,n}$ being binary, we can rewrite
\begin{equation}
{x_{m,n}}{C_{m,n}}({\beta _{m,n}},{\theta_{m,n}}) = {C_{m,n}}({x_{m,n}}{\beta _{m,n}},{\theta_{m,n}}).
\end{equation}
Let ${\tilde \beta _{m,n}}{\rm{ = }}{x_{m,n}}{\beta _{m,n}}$, where $0 \le {\tilde \beta _{m,n}} \le {x_{m,n}}\Lambda$ and $\Lambda  > 0$ is a predefined constant.

In the case that band $m$ is selected to transmit packets to user $n$, (15d) can be rewritten as
\begin{equation} \label{aaa1}
{e^{ - {\theta _{m,n}}{C_{m,n}}D_{th}^n}} \le P_{th}^n.
\end{equation}
In the case that both the licensed and unlicensed bands are selected to transmit packets to user $n$, (15d) can be relaxed by using Chebyshev's sum inequality, as given by
\begin{equation} \label{aaa2}
\sum\limits_{m \in {\cal M}} {{e^{ - {\theta _{m,n}}{C_{m,n}}D_{th}^n}}{C_{m,n}}}  \le \frac{{\sum\limits_{m \in {\cal M}} {{e^{ - {\theta _{m,n}}{C_{m,n}}D_{th}^n}}} \sum\limits_{m \in {\cal M}} {{C_{m,n}}} }}{{\left| {\cal M} \right|}}
\end{equation}
where $|\cdot|$ stands for cardinality.

Replacing the left-hand side (LHS) of (15d) with the right-hand side (RHS) of (18), we relax (15d) to
\begin{equation} \label{aaa3}
\frac{{\sum\limits_{m \in {\cal M}} {{e^{ - {\theta _{m,n}}{C_{m,n}}D_{th}^n}}} }}{{\left| {\cal M} \right|}}\sum\limits_{m \in {\cal M}} {{C_{m,n}}}  \le P_{th}^n\sum\limits_{m \in {\cal M}} {{C_{m,n}}}.
\end{equation}
By combining (\ref{aaa1}) and (\ref{aaa3}), (15d) can be reformulated as
\begin{equation} \label{aaa42}
\sum\limits_{m \in {\cal M}} {{e^{ - {\theta _{m,n}}{C_{m,n}}D_{th}^n}} - 1 + {x_{m,n}}}  \le P_{th}^n\sum\limits_{m \in {\cal M}} {{x_{m,n}}} .
\end{equation}

Define two auxiliary variables ${\delta _{m,n}} = {\tilde \beta _{m,n}}{\theta _{m,n}}$ and ${a_{m,n}} = \frac{1}{{{\theta _{m,n}}}}$. We have ${\tilde \beta _{m,n}} = {\delta _{m,n}}{a_{m,n}}$. The EC of user $n$ can be equivalently rewritten as
\begin{equation} \label{gy3}
\addtocounter{equation}{1}
{C_{1,n}}({\delta _{1,n}},{a_{1,n}}) = {a_{1,n}}{F^{ - 1}}({\delta _{1,n}}{\log _2}(1 + {{\bar \gamma }_{1,n}}){t_s}),
\end{equation}
\begin{equation} \label{gy4}
{C_{2,n}}({\delta _{2,n}},{a_{2,n}}) =  - \frac{{{a_{2,n}}}}{T}\log ({E_\gamma }\{ {e^{ - {\delta _{2,n}}T{{\log }_2}(1 + {\gamma})}}\} ).
\end{equation}
Thus, (\ref{aaa42}) can be rewritten as
\begin{equation}
\begin{array}{l}
{e^{ - {F^{ - 1}}({\delta _{1,n}}{{\log }_2}(1 + {{\bar \gamma }_{1,n}}){t_s})D_{th}^n}}{\rm{ + }}{e^{\frac{{\log ({E_\gamma }\{ {e^{ - {\delta _{2,n}}T{{\log }_2}(1 + \gamma )}}\} )}}{T}D_{th}^n}}\\
 - {\rm{2}} + \sum\limits_{m \in M} {{x_{m,n}}}  \le P_{th}^n\sum\limits_{m \in M} {{x_{m,n}}} ,\forall n \in N.
\end{array}
\end{equation}

We can relax the binary constraint, (15f), as the intersection of the following regions \cite{6816086}
\begin{equation}
0 \le {x_{m,n}} \le 1,\forall n \in {\mathcal{N}},\forall m \in {\mathcal{M}};
\end{equation}
\begin{equation} \label{aaa41}
\sum\limits_{m \in {\cal M}} {\sum\limits_{n \in \cal N} {({x_{m,n}} - ({x_{m,n}})^2)} }  \le 0.
\end{equation}

As a result, \textbf{P1} can be relaxed to a continuous optimization problem, given by
\begin{align*}
\stepcounter{equation}
\textbf{{P2:}}&\mathop {{\rm{minimize}}}\limits_{\{ {\delta _{m,n}}\} ,\{ {a_{m,n}}\} ,\{ {x_{m,n}}\} } \sum\limits_{n \in \mathcal{N}} {{\delta _{2,n}}{a_{2,n}}} \tag{\arabic{equation}a}
\\
s.t.&\sum\limits_{m \in M} {{C_{m,n}}} ({\delta _{m,n}},{a_{m,n}}) \ge {R_n},\forall n \in N;\tag{\arabic{equation}b}
\\
&\sum\limits_{n \in N} {{\delta _{1,n}}{a_{1,n}}}  \le {B_1};\tag{\arabic{equation}c}
\\
& \sum\limits_{m \in {\cal M}} {\sum\limits_{n \in \cal N} {({x_{m,n}} - ({x_{m,n}})^2)} }  \le 0; \tag{\arabic{equation}d}
\\
&\begin{array}{l}
{e^{ - {F^{ - 1}}({\delta _{1,n}}{{\log }_2}(1 + {{\bar \gamma }_{1,n}}){t_s})D_{th}^n}}{\rm{ + }}{e^{\frac{{\log ({E_\gamma }\{ {e^{ - {\delta _{2,n}}T{{\log }_2}(1 + \gamma )}}\} )}}{T}D_{th}^n}}\\
 - {\rm{2}} + \sum\limits_{m \in M} {{x_{m,n}}}  \le P_{th}^n\sum\limits_{m \in M} {{x_{m,n}}} ,\forall n \in \mathcal{N};
\end{array}\tag{\arabic{equation}e}
\\
& 0 \le {{\delta _{2,n}}{a_{2,n}}} \le {x_{m,n}}\Lambda,\forall n \in {\mathcal{N}},\forall m \in {\mathcal{M}};\tag{\arabic{equation}f}\\
&{\delta _{m,n}} \ge 0,{a _{m,n}} \ge 0, 0 \le {x_{m,n}} \le 1,\forall n \in {\mathcal{N}},\forall m \in {\mathcal{M}}.\tag{\arabic{equation}g}\\
\end{align*}
\vspace{-1cm}

The objective is still non-convex, since $\delta _{m,n}$ and $a_{m,n}$ are still coupled. Nevertheless, we can achieve a partial optimum\footnote{$(x^*,y^*)$ is called a partial optimum of $f$ on $B=X \times Y$, if $f(x^*,y^*) \leq f(x,y^*), \forall x \in X$, and $f(x^*,y^*) \leq f(x^*,y), \forall y \in Y$.} using the following propositions.

\begin{prop}
Given $\{\delta _{m,n}\}$ and $\{x_{m,n}\}$, P2 is linear programming in $\{a_{m,n}\}$.
\end{prop}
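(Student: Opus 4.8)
The plan is a direct, term-by-term verification. Fixing $\{\delta_{m,n}\}$ and $\{x_{m,n}\}$, I would substitute these constants into the objective and each constraint of \textbf{P2} and check that what remains is affine in the free variables $\{a_{m,n}\}$. The one structural fact that makes this work is already available from the factorized forms (\ref{gy3}) and (\ref{gy4}): each effective capacity splits as $C_{m,n}(\delta_{m,n},a_{m,n}) = a_{m,n}\,g_{m,n}(\delta_{m,n})$, with $g_{1,n}(\delta_{1,n}) = F^{-1}(\delta_{1,n}\log_2(1+\bar\gamma_{1,n})t_s)$ and $g_{2,n}(\delta_{2,n}) = -\frac{1}{T}\log(\mathbb{E}_\gamma\{e^{-\delta_{2,n}T\log_2(1+\gamma)}\})$, so that once $\delta_{m,n}$ is fixed, $g_{m,n}(\delta_{m,n})$ is a fixed real number and $C_{m,n}$ is linear in $a_{m,n}$.

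With that in hand, I would go through the problem. The objective $\sum_{n\in\mathcal{N}}\delta_{2,n}a_{2,n}$ is linear in $\{a_{2,n}\}$. Constraint (b) reads $\sum_{m}a_{m,n}g_{m,n}(\delta_{m,n}) \ge R_n$, a linear inequality. Constraint (c), $\sum_{n}\delta_{1,n}a_{1,n} \le B_1$, and the box condition (f), $0 \le \delta_{2,n}a_{2,n} \le x_{m,n}\Lambda$, are likewise linear in $\{a_{m,n}\}$ because their coefficients $\delta_{1,n}$, $\delta_{2,n}$ and $x_{m,n}$ are now constants; and (g) contributes only the linear constraints $a_{m,n}\ge 0$.

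Finally, I would observe that constraints (d) and (e) contain no occurrence of any $a_{m,n}$: with $\{\delta_{m,n}\}$ and $\{x_{m,n}\}$ held fixed, both sides of each are constants, so these constraints merely impose a feasibility condition on the fixed data — they are vacuous for the optimization over $\{a_{m,n}\}$ when satisfied, and render the restricted problem infeasible otherwise. Collecting the pieces, the restricted problem minimizes a linear objective subject to linear inequalities, i.e., it is a linear program in $\{a_{m,n}\}$. I do not anticipate a real obstacle; the only point needing care is to invoke the monotonicity/invertibility argument preceding (\ref{eq11}), which guarantees that $F^{-1}$ (and similarly the log-MGF appearing in (\ref{gy4})) is a well-defined finite quantity, so that each $g_{m,n}(\delta_{m,n})$ is genuinely a constant coefficient rather than an implicitly defined function of the remaining variables.
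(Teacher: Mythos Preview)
Your argument is correct and follows the same direct route as the paper: with $\{\delta_{m,n}\}$ and $\{x_{m,n}\}$ frozen, the factorizations (\ref{gy3})--(\ref{gy4}) make $C_{m,n}$ linear in $a_{m,n}$, whence the objective and all active constraints are affine. In fact you are more careful than the paper, whose proof only cites the affinity of (27a), (27b), and (27c); your additional observation that (27d) and (27e) contain no $a_{m,n}$ at all, and that (27f)--(27g) reduce to linear box constraints, fills in what the paper leaves implicit.
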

\begin{proof}
Since (27a), (27b), and (27c) are affine, the proposition can be proved.
\end{proof}
\begin{prop}
Given $\{a_{m,n}\}$, P2 can be reformulated by using difference of convex (DC) programming, as given by
\begin{equation} \label{aaaad6}
\begin{aligned}
\textbf{\emph{P3:}}\mathop {{\rm{minimize}}}\limits_{\{ {\delta _{m,n}}\} ,\{ {x_{m,n}}\} }  &\sum\limits_{n \in \mathcal N} {{\delta _{2,n}}{a_{2,n}}}+ \lambda \sum\limits_{m \in \mathcal M} {\sum\limits_{n \in \mathcal N} {{x_{m,n}}}} \\
&- \lambda \sum\limits_{m \in \mathcal M} {\sum\limits_{n \in \mathcal N} {{{({x_{m,n}})}^2}} }
\end{aligned}
\end{equation}
\begin{equation}
s.t. (27b)-(27c),(27e)-(27g)
\end{equation}
where $\lambda$ is a large penalty factor.
\end{prop}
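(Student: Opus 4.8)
The plan is to obtain \textbf{P3} from \textbf{P2} in two moves: (i) eliminate the single non-convex constraint (27d) by folding it into the objective as a penalty, and (ii) verify that, with $\{a_{m,n}\}$ held fixed, the resulting objective is a difference of two convex functions while every remaining constraint defines a convex set.

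For move (i), note that on the box $0\le x_{m,n}\le 1$ one has $x_{m,n}-(x_{m,n})^2\ge 0$, so (27d) is exactly the integrality requirement $x_{m,n}\in\{0,1\}$. Adjoining it to the objective with multiplier $\lambda$ produces the penalized cost $\sum_{n}\delta_{2,n}a_{2,n}+\lambda\sum_{m,n}\bigl(x_{m,n}-(x_{m,n})^2\bigr)$; since the feasible set of the remaining (continuous, compact) constraints is nonempty and the penalty term is bounded on the box, a standard exact-penalty argument yields a finite $\lambda_0$ such that for every $\lambda\ge\lambda_0$ the penalized problem and \textbf{P2} share the same minimizers, which justifies removing (27d). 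This is precisely the objective and constraint list of \textbf{P3}.

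For move (ii), split the \textbf{P3} objective as $g-h$ with $g=\sum_n\delta_{2,n}a_{2,n}+\lambda\sum_{m,n}x_{m,n}$ and $h=\lambda\sum_{m,n}(x_{m,n})^2$; with $\{a_{m,n}\}$ fixed, $g$ is affine and $h$ is a nonnegative weighted sum of squares, so both are convex and the objective is DC. For the constraints, (27c), (27f), (27g) are affine once $\{a_{m,n}\}$ is frozen. For (27b), I would show each EC is concave in its $\delta$: $F(x)=\log(\hat t_{\rm off}(e^x))+xt_s$ is convex because $\log\hat t_{\rm off}(e^x)=\log M_{\rm off}(x)$ is a cumulant generating function and $xt_s$ is linear, hence its increasing inverse $F^{-1}$ is concave, so $C_{1,n}=a_{1,n}F^{-1}(\delta_{1,n}\log_2(1+\bar\gamma_{1,n})t_s)$ is concave in $\delta_{1,n}$ for $a_{1,n}\ge0$; likewise $\delta_{2,n}\mapsto\log E_\gamma\{e^{-\delta_{2,n}T\log_2(1+\gamma)}\}$ is a cumulant generating function hence convex, so $C_{2,n}=-\tfrac{a_{2,n}}{T}\log E_\gamma\{\cdot\}$ is concave; therefore $\sum_m C_{m,n}\ge R_n$ is convex. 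For (27e), the exponent of the first term is $-D_{th}^n F^{-1}(\text{affine in }\delta_{1,n})$, convex as the negative of a concave function composed with an affine map, and $\exp$ of a convex function is convex; the exponent of the second term is $\tfrac{D_{th}^n}{T}\log E_\gamma\{e^{-\delta_{2,n}T\log_2(1+\gamma)}\}$, convex by the same cumulant-generating-function argument, again composed with $\exp$; and the leftover terms collapse to the affine expression $-2+(1-P_{th}^n)\sum_m x_{m,n}$, so the left-hand side of (27e) is convex. Hence \textbf{P3} has a convex feasible set, and with the DC objective this proves the claim.

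The main obstacle is the convexity bookkeeping in move (ii), in particular establishing concavity of $F^{-1}$ (equivalently convexity of $F$), which is not apparent from the formula and rests on identifying $\log M_{\rm off}$ as a cumulant generating function, and then propagating this fact through the two exponential terms of (27e) via the rule that $\exp$ of a convex function is convex. The penalty reformulation of move (i) and the affinity of (27c), (27f), (27g) are routine once $\{a_{m,n}\}$ is fixed.
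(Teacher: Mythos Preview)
Your proposal is correct and follows essentially the same route as the paper: penalize the integrality constraint (27d) into the objective to obtain the DC split, and establish concavity of the effective capacities by showing $F$ is convex and increasing so that $F^{-1}$ is concave (the paper derives this via the Lyapunov/H\"older inequality, which is precisely the convexity of the cumulant generating function you invoke). Your treatment is in fact more complete than the paper's, since you explicitly verify the convexity of (27e) term by term, whereas the paper simply asserts that ``all the constraints are convex.''
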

\begin{proof}
Applying Lyaponuv inequality \cite{6840975}, we have
\begin{equation}\label{eq42}
\mathbb{E}{[{\left| {{e^{{t_{\rm off}}}}} \right|^{{x_1}}}]^\alpha }\mathbb{E}{[{\left| {{e^{{t_{\rm off}}}}} \right|^{{x_2}}}]^{1 - \alpha }} \ge \mathbb{E}[{\left| {{e^{{t_{\rm off}}}}} \right|^{\alpha {x_1} + (1 - \alpha ){x_2}}}]
\addtocounter{equation}{1}
\end{equation}
and in turn, we have
\begin{equation}\label{eq12}
\begin{array}{l}
\alpha F({x_1}) + (1 - \alpha )F({x_2})\\
 = \log (\mathbb{E}{[{e^{{x_1}{t_{\rm off}}}}]^\alpha }\mathbb{E}{[{e^{{x_2}{t_{\rm off}}}}]^{1 - \alpha }}) + (\alpha {x_1} + (1 - \alpha ){x_2}){t_s}\\
 \ge \log (\mathbb{E}[{e^{(\alpha {x_1} + (1 - \alpha ){x_2}){t_{\rm off}}}}]) + (\alpha {x_1} + (1 - \alpha ){x_2}){t_s}\\
 = F(\alpha {x_1}{\rm{ + }}(1 - \alpha ){x_2}).
\end{array}
\end{equation}
In (\ref{eq12}), $F(x)$ is convex in $x$. Since $F(x)$ also monotonically increases, we can prove that ${C_{1,n}}$ is concave in $\delta_{1,n}$.

We can prove that ${C_{2,n}}$ is concave in $\delta_{2,n}$ using Lyaponuv inequality. The detailed proof is given in \cite{6840975}, and is omitted here.

We proceed to define
\begin{align}
f({\delta _{m,n}},{x_{m,n}}) &= \sum\limits_{n \in N} {\delta _{2,n}}{a_{2,n}} + \lambda \sum\limits_{m \in M} {\sum\limits_{n \in N} {{x_{m,n}}} }\\
g({x_{m,n}}) &= \lambda \sum\limits_{m \in M} {\sum\limits_{n \in N} {{{({x_{m,n}})}^2}} }.
\end{align}

Given ${a _{m,n}}$, $f({\delta _{m,n}},{x_{m,n}})$  and $g({x_{m,n}})$ are convex. Thus (\ref{aaaad6}) is the difference of two convex functions. Given that all the constraints are convex, P3 is a DC programming in $\{\delta _{m,n}, x _{m,n}\}$. Referring to \cite{7529170}, it can be proved that, given $\{a_{m,n}\}$ and a sufficiently large value of $\lambda$, \textbf{P3} is equivalent to \textbf{P2}.
\end{proof}

Based on Propositions 1 and 2, \textbf{P2} can be efficiently solved recursively by exploiting a block coordinated descent (BCD) framework \cite{6463487}, as summarized in Algorithm 1. Algorithm 1 consists of two steps. In the first step, given $\{\delta _{m,n}\}$ and $\{x _{m,n}\}$, \textbf{P2} is linear programming in ${a_{m,n}}$, which can be solved efficiently by using standard convex optimization techniques, such as the interior-point method. In second step, given ${a_{m,n}}$, \textbf{P3} is DC programming in $\{\delta _{m,n}, x _{m,n}\}$.

At the $l$-th iteration of DC programming, we use the first order Taylor expansion for $g( \cdot )$ to
approximate the objective function \cite{7529170}, as given by
\begin{equation} \label{gyy}
\begin{array}{l}
f({\delta _{m,n}},{x_{m,n}}) - g({x_{m,n}})\\
 \approx f({\delta _{m,n}},{x_{m,n}}) - g(x_{m,n}^{(l)}) - \left\langle {\nabla g(x_{m,n}^{(l)}),({x_{m,n}} - x_{m,n}^{(l)})} \right\rangle
\end{array}
\end{equation}
where $\left\langle  \cdot  \right\rangle$ denotes inner product. As a result, \textbf{P3} can be convexified, as given by
\begin{equation} \label{aaaa6}
\begin{array}{*{20}{l}}
{\mathop {{\rm{minimize}}}\limits_{\{ {\delta _{m,n}}\} ,\{ {x_{m,n}}\} }  \sum\limits_{n \in N} {\delta _{2,n}}{a_{2,n}} + \lambda \sum\limits_{m \in M} {\sum\limits_{n \in N} {{x_{m,n}}} } }\\
{ - \lambda \sum\limits_{m \in \mathcal{M}} {\sum\limits_{n \in \mathcal{N}} {\left( {{{(x_{m,n}^{(l - 1)})}^2} - 2x_{m,n}^{(l - 1)}({x_{m,n}} - x_{m,n}^{(l - 1)})} \right)} } }
\end{array}
\end{equation}
\begin{equation}
s.t. (27b)-(27c),(27e)-(27g).
\end{equation}

\begin{algorithm}[t]
  \caption{Optimal Scheduling algorithm across Heterogeneous Air Interfaces of LWA}
  \label{OPTIMAL RESOURCE ALLOCATION ALGORITHM}
  \begin{algorithmic}[1]
  \REQUIRE ~~\\
Given N users with minimum rata requirements ${R_n}$, and QoS requirement $\{P_{th}^n, D_{th}^n\}$, and channel gain ${\bar \gamma _{m,n}}$, and the total unlicensed bandwidth ${B_1}$; \\
  \ENSURE ~~\\
  \STATE Find any feasible point satisfying the constraint (27b)-(27g).
  \STATE Initialize $\{x^{(0)}_{m,n}\}$
  \REPEAT
  \STATE Keep ${\delta _{m,n}}$ and ${x_{m,n}}$ fixed for all user $n$ and band $m$. Optimize \textbf{P2} with respect to ${a_{m,n}}$.
  \REPEAT
  \STATE Keep $a_{n}$ fixed for all user $n$ and band $m$. Optimize (36) with respect to $\{\delta _{m,n}\}$ and $\{x _{m,n}\}$.
  \STATE Update $\{ x_{m,n}^*\}  \to \{ x_{m,n}^{(l)}\}$.
  \UNTIL{convergence}
  \UNTIL{convergence}
  \end{algorithmic}
\end{algorithm}

Algorithm 1 is convergent. This is because ${\delta _{m,n}}$, ${x _{m,n}}$ and ${a_{m,n}}$ can be recursively updated in sequel to reduce the objective function of \textbf{P2}, which monotonically decreases. Also, as the QoS requirement is finite, ${\delta _{2,n}}{a_{2,n}}$ is lower bounded.

The first step, exploiting the interior-point method, has a complexity of ${\mathcal O}({(MN)^3})$. In the second step, there are totally $2MN$ variables and $(2N+MN+1)$ convex and linear constraints in (36). Thus, the complexity of DC programming is $\mathcal{O}({(2MN)^3}(2N + MN +1))$ \cite{6816086}.  As a result, the overall complexity of the proposed algorithm is $\mathcal O({(MN)^6}(2N+ MN+1))$.

\section{Performance Evaluation}
In this section, monte-carlo simulations are run to evaluate the proposed algorithm. Assume that the time frame length of LTE $T = 1{\kern 1pt} ms$. The channels of licensed and unlicensed bands follow the ITU-UMi Models. There are 8 users uniformly distributed in the coverage overlapping area of the LWA BS and $L$ WiFi nodes. All users are set up with an identical minimum data rate ${R_n} = 1~{\rm{Mbps}},\forall {n}$ and delay threshold $D^n_{th} = 0.2~\rm s$. For comparison purpose, we also simulate the following two heuristic schemes:

\textbf{Sequential allocation scheme} (SAS): The LWA BS sorts users in the descending order of SNR and sequentially allocates the unlicensed and licensed spectrum to users. First, the LWA BS sequentially allocates the unlicensed bandwidth to the ordered user, until the unlicensed bandwidth is used up or the QoS of all users are satisfied. If the unlicensed band is insufficient to meet all the service requirements, repeats this allocation in the licensed band.

\textbf{Static mapping scheme} (SMS): A static mapping table is maintained. Let $\gamma$ denote a preconfigured fraction ($0 \le \gamma  \le 1$) of the minimum required rates are allocated to the unlicensed band, which be decided according to the QoS Class Indicator (QCI) or the types of traffics. Without loss of generality, we set $\gamma  = 0.6$.

\begin{figure}[t]
\centering
\includegraphics[width=3in]{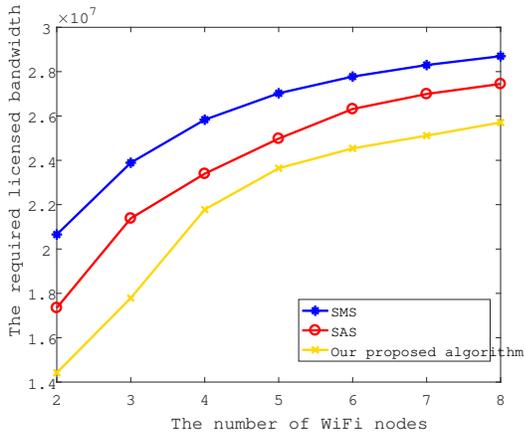}
\vspace{-0.4cm}
\caption{The required licensed bandwidth versus the number of WiFi APs.}
\label{fig.2}
\vspace{-0.2cm}
\end{figure}
In Fig. 2, we evaluate the required licensed bandwidth, with the number of WiFi nodes in the surrounding. We observe that the number of WiFi nodes has a strong impact on the requirement of the licensed bandwidth. The licensed bandwidth increases with the number of WiFi nodes, since the collisions in the unlicensed band aggravates. We also see that our proposed algorithm can reduce the allocated licensed bandwidth by up to 16.89\% than other schemes, when there is a small number of WiFi nodes.

\begin{figure}[t]
\centering
\includegraphics[width=3in]{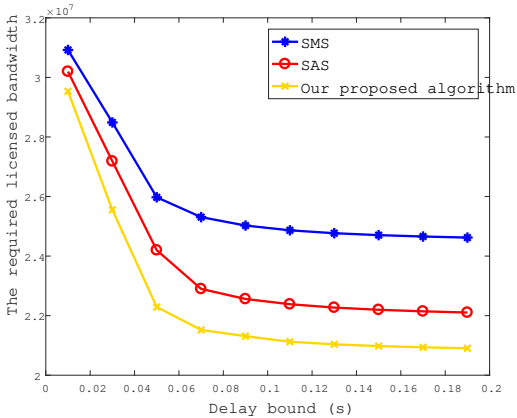}
\vspace{-0.4cm}
\caption{The licensed band versus the delay thresholds.}
\label{fig.3}
\vspace{-0.2cm}
\end{figure}
Fig. 3 shows the requirement of the licensed bandwidth versus the delay bound of the traffic. Within the coverage of the LWA BS, there are $N=4$ WiFi nodes. The figure shows that our proposed algorithm can substantially outperform the other schemes, and the gains of the proposed algorithm can be as high as up to 15.07\% and 5.38\%, as compared to SMS and SAS, respectively.

\emph{Remark}: The proposed algorithm is to minimize the total required licensed spectrum without degrading the QoS in the presence of multiple users. Another application is to maximize the total EC of LWA by optimizing multi-band resource allocation. By exploiting the aforementioned concavity of the EC again, as well as block coordinated descent framework, the EC maximization can be recursively solved by using our proposed algorithm.
\section{Conclusion}
In this paper, we analyze the aggregated EC of an LWA system based on a new Semi-Markov model. Then, we investigate an optimal scheduling of the LWA system to minimize the required licensed spectrum by a iterative algorithm. The data streams with QoS can be decoupled into multiple sub-streams with different QoS in adaptation to heterogeneous air interfaces. Simulation results show that the proposed algorithm can provide significant gains over the heuristic alternatives.

\section*{Acknowledgment}
The work was supported by National Nature Science Foundation of China Project (Grant No. 61471058 and 61461136002), Hong Kong, Macao and Taiwan Science and Technology Cooperation Projects (2016YFE0122900) and the 111 Project of China (B16006), BUPT Excellent Ph.D. Students Foundation (CX2017305) and in part by the Beijing Science and Technology Commission Foundation under Grant (201702005) and the China Scholarship Council (CSC).


\end{document}